\theoremstyle{plain}
\newtheorem{theorem}{Theorem}
\newtheorem{corollary}{Corollary}
\newtheorem{lemma}{Lemma}
\newenvironment{problem}[2][Problem]{\begin{trivlist}
\item[\hskip \labelsep {\bfseries #1}\hskip \labelsep {\bfseries #2}]}{\end{trivlist}}
\renewcommand{\vec}[1]{\boldsymbol{#1}}
\begin{document}

\title{Scheduling  under   dynamic  speed-scaling   for  minimizing
  weighted completion time and energy consumption}

\author{Christoph Dürr\thanks{CNRS, LIP6, Université Pierre et Marie
    Curie, Paris, France.}
    \and
    {{\L}ukasz Je{\.z}}\thanks{Blavatnik School of Computer Science, Tel Aviv University, Israel, and Institute of  Computer Science, University of  Wroc{\l}aw, Poland.}
    \and
    {Oscar C. Vásquez}\thanks{LIP6 and Industrial Engineering Department, University of Santiago of
  Chile.}
  }

\maketitle

\begin{abstract}
Since a  few years there is  an increasing interest  in minimizing the energy  consumption  of  computing  systems.  However in  a  shared  computing system, users  want to optimize their experienced  quality of service, at the price of a high  energy consumption.  In this work, we address the problem of optimizing and designing mechanisms for a linear combination of weighted completion time and energy consumption on a single machine with dynamic speed-scaling.  We show that minimizing linear combination reduces to a unit speed scheduling problem under a polynomial penalty function. In the mechanism design setting, we define a cost share mechanism and studied its properties, showing that it is \textit{truthful} and the overcharging of total cost share is bounded by a constant.
\end{abstract}

\paragraph{Keywords:}
scheduling, energy management, quality of service, optimization, mechanism design.

\section{Introduction}
Humanity has entered a period when natural resources become rare. This situation triggered consciousness in responsible consumption, and many countries, companies and individuals aim to minimize their energy consumption. Minimizing energy consumption is a relatively new topic in decision theory, giving rise to new problems and research areas.

An area of increasing interest is the energy consumption minimization of computing systems with dynamic speed-scaling, allowed in modern microprocessors by technologies such as Intel SpeedStep, AMD PowerNow!, or IBM EnergyScale.  The theoretical energy consumption model has been introduced in \cite{YaoDemmersShenker:95:Energy-minimization}, and triggered the development of offline and online algorithms; see~\cite{Albers:10:Survey-energy-management} for an overview.

In these systems, minimizing energy consumption of the machines and minimizing waiting times of the users are opposed goals~\cite{IraniPruhs:05:Problems-energy-management}. Small waiting times improve the quality of service experienced by the users, which generally comes at the price of high energy consumption.

The online and offline optimization problem for minimizing flow time, while respecting a maximum energy consumption, has been studied for a single machine in \cite{AlbersFujiwara:07:Some-minimization-energy-QoS,BansalKimbrelPruhs:07:Proof-optimality-YDS-KKT,PruhsUthaisombutWoeginger:08:Minimization-unweighted-completion-time-constraint-energy,ChanLamLee:10:Some-minimization-energy-QoS} and for parallel machines in \cite{AngelBampisKacem:12:Parallel-unrelated-machine-scheduling-energy}. For the  variant  where  an  aggregation   of  energy  and  flow  time  is considered, polynomial time approximation algorithms have been presented in \cite{CarrascoIyegarStein:11:Some-minimization-energy-QoS,BansalChanKatzPruhs:12:Some-minimization-energy-QoS,MegowVerschae:13:PTAS-piecewise-penalty-function}.

In this paper, we study this problem from a decentralized and realistic perspective.  Here in order to optimize the objective function, a service operator needs specific information from the users together with the characteristics of the submitted jobs. This situation creates the need for a truthful mechanism.

Specifically, we consider a simplified computing system with a single shared machine using dynamic speed-scaling, meaning it can run at a variable continuous speed to influence the completion times of the jobs. Users submit jobs to  this system,  each  job has some  workload,  representing a number of instructions to execute and a delay penalty factor. All jobs are available from time 0 on.  During the submission of a job, only the workload is publicly known, while the players might announce false delay penalties in order to influence the game towards their interest.

The machine is controlled by an \textit{operator} who aims to minimize the sum of the total weighted completion time and of the energy consumption cost. To this end he decides both on the speed function of the machine, and the order in which jobs are to be scheduled.  The energy cost consumed by the schedule needs to be charged to the users. The individual goal of each user is to minimize the sum of waiting time and the energy cost share.  Therefore it is the charging scheme chosen by the operator that influences the players' behavior.

In a companion paper~\cite{DurrJezVasquez:13:Marginal-Game}, we study a similar game, where the players announce a strict deadline for their job, while keeping the delay penalty factor private. This way the players control the quality of service guaranteed by the operator, leaving to the operator the goal of optimizing the consumed energy under this constraint.

\section{The Model}

Consider a non-cooperative game with $n$ users and a computing system with a single shared machine using dynamic speed-scaling. Each player has a single job $i$ with a positive workload $w_i$ and a positive delay penalty $p_i$. When the player submits  his job, he announces the workload and some delay penalty $\hat p_i$.  The announced value might differ from the real value, in order to influence the game towards his advantage, while the workload has to be the true value.  The latter assumption makes sense, since it is a quantity observable by the operator, who could punish players in case they lied on the workload.

The machine is controlled by an operator, who upon reception of the jobs, has to produce a schedule.  This schedule generates some energy consumption, and the controller needs to charge this value to the players, according to some \emph{charging scheme}. This charging scheme is known in advance to all users.

A schedule is defined by an execution order $\pi$ and an execution speed.  Following  \cite{YaoDemmersShenker:95:Energy-minimization} it is assumed that every job $i$ is scheduled at constant speed, and for the purpose of simplifying notation we rather specify the execution length $\ell_i$ of every job $i$, rather than its speed $s_i$, which is $w_i/\ell_i$.  Two costs are associated with a schedule: the energy cost
\[
     E(\vec{\ell}, \vec{w}) := \sum_i \ell_i s_i^\alpha = \sum_i w_i^\alpha \ell_i^{1-\alpha}
\]
defined for some fixed physical constant $2\leq \alpha \leq 3$, and the weighted flow time, representing the quality of service delivered to the users,
\[
    F(\pi,\vec{\ell}, \vec{p}) := \sum_i p_i C_i,
\]
where $C_i$ is the completion time of job $i$, defined as $\sum_{j: \pi(j)\leq \pi(i)} \ell_j$, with $\pi(j)$ being the rank of job $j$ in the schedule.

Ideally the operator would like to minimize the total cost $E(\vec{\ell},\vec{w})+F(\pi,\vec{\ell},\vec{p})$, which we call the \emph{social cost}.  When adding the two costs, we consider the conversion of energy and completion time into monetary values, and assume for simplification that the conversion factors are hidden in the penalty factors.
When optimizing the social cost, some balance between the two components has to be found, because the energy cost is small when the execution lengths are large, while for the weighted flow time it is exactly opposite.  Several problems arise in this situation.

First, the game operator knows only the announced penalty factors $\vec{\hat p}$, so the game has to be \emph{truthful}. This means that every player optimizes his cost by announcing the true value $\hat p_i= p_i$.

Second, no polynomial algorithm for finding a schedule $\pi,\vec{\ell}$ that minimizes $E(\vec{\ell},\vec{w})+F(\pi,\vec{\ell},\vec{p})$ for arbitrary value $\alpha$ is known.  In fact, it is also not known whether this
problem is NP-hard.
However, a PTAS is known~\cite{MegowVerschae:13:PTAS-piecewise-penalty-function}.
Dominance properties for this problem have been established in \cite{DurrVasquez:13:Rules}.
We note that the solution boils down to finding the right order
of scheduling, since once $\pi$ is fixed, the optimum durations (speeds) for processing each job can be easily determined,
cf.~\cite{MegowVerschae:13:PTAS-piecewise-penalty-function} or
Section~\ref{sec:equivalence}.

The operator charges to every player $i$ some value $b_i$, depending on the submitted jobs and on the constructed schedule.  This charge has two roles. On one side it is supposed to cover the energy cost of the schedule, and on the other side it influences the players behavior, as every player $i$ wants to minimize the sum of the weighted completion time of his job plus his cost share
\(
         p_i C_i + b_i.
\)

\subsection{Desirable properties}

In summary,
we want to design a cost sharing mechanism that is
\begin{itemize}
\item \textit{truthful}, meaning that every player minimizes his penalty by  announcing his true value, i.e. $\vec{\hat p}=\vec{p}$. This implies that the strategy profile $\vec{p}$ is a pure Nash equilibrium.
\item \textit{$\beta$-budget-balanced} for some constant $\beta$,  meaning that the sum of cost shares is at least the energy cost and at most $\beta$ times this value.
\item \emph{efficient}, meaning that the social cost of the Nash equilibrium is close to the social cost optimum.
\end{itemize}

\section{Optimizing social cost}\label{sec:equivalence}

We consider the centralized optimization problem consisting in minimizing the sum of the energy consumption and of the total weighted completion times, under the assumption that the regulator knows the true penalty factors $\vec{p}$ of all players. Formally it can be stated as follows.

\begin{problem}{A} Given $n$ jobs with workloads $\vec{w}$ and penalty factors $\vec{p}$, find a schedule defined by $\pi,\vec{\ell}$ which minimizes $A_{\vec{w},\vec{p}}(\pi,\vec{\ell}) = E(\vec{\ell},\vec{w})+F(\pi,\vec{\ell},\vec{p})$.
\end{problem}

This problem is equivalent to another scheduling problem.  Here the machine runs at uniform speed, and the role of the speed is encoded in the objective function.  Note that the two values of the input $\vec{w},\vec{p}$ play opposite roles in both problems.

\begin{problem}{B} Given $n$ jobs with priority weights $\vec{w}$ and processing times $\vec{p}$, find a schedule defined by an order $\sigma$ which minimizes $B_{\vec{w},\vec{p}}(\sigma)=\sum w_j C_j^{(\alpha-1)/\alpha}$, where the completion time $C_j$ is defined as $\sum_i p_i$ over all jobs $i$ with $\sigma(i) \leq \sigma(j)$.
\end{problem}

In this section we show equivalence between the two problems.  Our proof uses the Hessian conditions. Independently, a similar reduction has been discovered in~\cite{MegowVerschae:13:PTAS-piecewise-penalty-function} using Karush-Kuhn-Tucker conditions, relating Problem B and a variant of Problem A, with the goal of minimizing total weighted completion time under a given energy budget.

\begin{theorem} \label{theo:reduction} Let $\pi$ be a job order, and denote by $\sigma$ its reverse, i.e.\ $\sigma(i)=n+1-\pi(i)$.  Let $\vec{\ell}$ be the execution length vector minimizing $A_{\vec{w},\vec{p}}(\pi,\vec{\ell})$.
Then $A_{\vec{w},\vec{p}}(\pi,\vec{\ell}) = \alpha(\alpha-1)^{(\alpha-1)/\alpha} \cdot B_{\vec{w},\vec{p}}(\sigma)$.
\end{theorem}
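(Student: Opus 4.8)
The plan is to exploit the fact that, once the order $\pi$ is fixed, the objective $A_{\vec{w},\vec{p}}(\pi,\cdot)$ decouples into one independent one-dimensional minimization per job, each solvable in closed form. First I would rewrite the weighted completion time by exchanging the order of summation:
\[
F(\pi,\vec{\ell},\vec{p})=\sum_i p_i\sum_{j:\pi(j)\le\pi(i)}\ell_j=\sum_j\ell_j\,P_j,\qquad P_j:=\sum_{i:\pi(i)\ge\pi(j)}p_i,
\]
so that $P_j$, the total penalty of the jobs occupying positions $\pi(j),\dots,n$, depends only on $\pi$ and $\vec{p}$, not on $\vec{\ell}$. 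Plugging this into $A$ gives the separable form $A_{\vec{w},\vec{p}}(\pi,\vec{\ell})=\sum_j\big(w_j^\alpha\ell_j^{1-\alpha}+P_j\ell_j\big)$.

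Next I would invoke the Hessian condition: since this sum is separable, its Hessian in $\vec{\ell}$ is diagonal with entries $\alpha(\alpha-1)\,w_j^\alpha\ell_j^{-\alpha-1}$, strictly positive on the positive orthant because $\alpha>1$. Hence $A$ is strictly convex there, and as each summand $g_j(\ell):=w_j^\alpha\ell^{1-\alpha}+P_j\ell$ tends to $+\infty$ both as $\ell\to0^+$ (using $1-\alpha<0$) and as $\ell\to\infty$, the minimizing $\vec{\ell}$ is the unique interior stationary point, found coordinatewise from $g_j'(\ell)=0$. Solving $(1-\alpha)w_j^\alpha\ell^{-\alpha}+P_j=0$ gives $\ell_j=w_j\big((\alpha-1)/P_j\big)^{1/\alpha}$, which incidentally reproves the remark that fixing $\pi$ already determines the optimal durations.

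The only genuine computation is to substitute this $\ell_j$ back into $g_j$ and simplify. Both resulting summands turn out to equal a numerical constant times $w_j P_j^{(\alpha-1)/\alpha}$, and after collecting the powers of $\alpha-1$ the per-job value becomes $\alpha(\alpha-1)^{(\alpha-1)/\alpha}\,w_j P_j^{(\alpha-1)/\alpha}$, the same constant for every $j$. I expect this exponent bookkeeping to be the step most prone to error — it is easy to produce a reciprocal power of $\alpha-1$ by mistake — so I would carry it out writing $\beta:=\alpha-1$ to keep the algebra transparent.

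Finally I would identify $P_j$ with the Problem~B completion time of job $j$ under $\sigma$. Since $\sigma(i)\le\sigma(j)\iff n+1-\pi(i)\le n+1-\pi(j)\iff\pi(i)\ge\pi(j)$, the jobs counted in $C_j$ for Problem~B are exactly $\{i:\pi(i)\ge\pi(j)\}$, so $C_j=P_j$. Summing the per-job values over $j$ then yields $A_{\vec{w},\vec{p}}(\pi,\vec{\ell})=\alpha(\alpha-1)^{(\alpha-1)/\alpha}\sum_j w_j C_j^{(\alpha-1)/\alpha}=\alpha(\alpha-1)^{(\alpha-1)/\alpha}\,B_{\vec{w},\vec{p}}(\sigma)$, as claimed.
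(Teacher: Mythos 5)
Your proposal follows essentially the same route as the paper's proof: exchange the order of summation so that $F=\sum_j\ell_jP_j$, observe that $A(\pi,\cdot)$ is separable with a positive diagonal Hessian, solve the per-coordinate first-order condition to get $\ell_j=w_j((\alpha-1)/P_j)^{1/\alpha}$, substitute back, and identify $P_j$ with the Problem~B completion time under the reversed order. Your additional remark that each $g_j$ blows up at both ends of $(0,\infty)$, so the interior stationary point really is the global minimizer, is a small but genuine tightening of the paper's argument, which only checks positive definiteness of the Hessian.

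The one step you deferred -- the ``exponent bookkeeping'' -- is exactly where your asserted answer goes wrong. Writing $\beta:=\alpha-1$, substitution gives
\begin{align*}
g_j(\ell_j)&=w_j^\alpha\bigl(w_j\beta^{1/\alpha}P_j^{-1/\alpha}\bigr)^{1-\alpha}+P_j\cdot w_j\beta^{1/\alpha}P_j^{-1/\alpha}\\
&=w_jP_j^{(\alpha-1)/\alpha}\bigl(\beta^{(1-\alpha)/\alpha}+\beta^{1/\alpha}\bigr)
=w_jP_j^{(\alpha-1)/\alpha}\,\beta^{1/\alpha}\tfrac{1+\beta}{\beta}
=\alpha(\alpha-1)^{(1-\alpha)/\alpha}\,w_jP_j^{(\alpha-1)/\alpha},
\end{align*}
so the constant carries $(\alpha-1)$ to the power $(1-\alpha)/\alpha$, the \emph{reciprocal} of what you wrote. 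This is in fact the constant the paper's own chain of equalities ends with ($\alpha(\alpha-1)^{\frac{1-\alpha}{\alpha}}\cdot B$); the exponent $(\alpha-1)/\alpha$ appearing in the theorem statement, which you reproduced, looks like a sign typo in the paper (the two agree only at $\alpha=2$). So: same method, sound structure, but the single computation you flagged as error-prone must be carried out, and when it is, it contradicts the constant you claimed.
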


\begin{proof}
Fix permutations $\pi,\sigma$ and vector $\vec{\ell}$ with the required condition. Without loss of generality, we assume that $\pi(i)=i$ (and $\sigma(i)=n+1-i$).  This can always be achieved by renaming the jobs, since the problems are independent on the actual job indices.

The objective value of problem A is
\begin{equation}
\label{eq:reformulte}
A_{\vec{w},\vec{p}}(\pi,\vec{\ell})= \sum_{j=1}^n w_{j}^{\alpha}\ell_j^{1-\alpha} + \sum_{j=1}^n p_{j} \sum_{k=1}^j \ell_k.
\end{equation}

For any job $j$, the value above must be a local minimum with respect to $\ell_j$, meaning that its derivative is zero. In other words
\[
(1-\alpha) w_{j}^{\alpha} \ell_j^{-\alpha} + \sum_{k=j}^n p_{k}=0,
\]
or equivalently
\begin{align}
s_j &= w_{j} / \ell_j = \left( \frac{\sum_{k=j}^n p_{k}}{\alpha - 1} \right)^{\frac{1}{\alpha}}\enspace,\label{eq:wl}\\
     \ell_j &= w_{j} \cdot \left(\frac{(\alpha-1)}
                                   {\sum_{k=j}^n  p_{k}}
                                 \right)^{\frac 1\alpha}
                                 \enspace. \label{eq:l}
\end{align}
This condition is sufficient, since the Hessian is positive definite. In fact,
the first derivative in $\ell_j$ is independent of $\ell_i$ for any $i\neq j$, and so
the Hessian of  $A$ has zero non-diagonal terms, whereas the second
derivative of $A$ in $\ell_j$ is
\[
        w_{j}^\alpha(\alpha-1)\alpha /\ell_j^{\alpha+1},
\]
which is positive for positive $\ell_j$ and $\alpha>1$. Thus, we have that the diagonal terms of the Hessian of $A$ are positive, the Hessian is positive definite and then, $A$ is minimized by a vector $\vec{\ell}$ which sets to zero the first derivative for every $j$.

Now, we rewrite the second term of the sum in \eqref{eq:reformulte}
\begin{equation}
\label{ref:relationflowtime}
\sum_{j=1}^n p_j \sum _{k=1}^j \ell_k=\sum_{j=1}^n \ell_j \sum_{i=k}^n p_k
\end{equation}
and have,
\begin{align}
\label{eq:reformulte2}
A_{\vec{w},\vec{p}}(\pi,\vec{\ell}) = \sum_{j=1}^n w_{j}^{\alpha}\ell_j^{1-\alpha}+ \sum_{j=1}^n \ell_j \sum_{k=j}^n p_k=& \sum_{j=1}^n \ell_j \left(w^{\alpha}_{j} \ell_j^{-\alpha}+\sum_{k=j}^n p_k\right)
\end{align}
Finally, we replace \eqref{eq:wl} for $w^{\alpha}_{j} \ell_j^{-\alpha}$ and then \eqref{eq:l} for $\ell_j$ in \eqref{eq:reformulte2} yields
\begin{align*}
 A_{\vec{w},\vec{p}}(\pi,\vec{\ell})&=\sum_{j=1}^n \ell_j \left(\frac{\sum_{k=j}^n p_{k}}{\alpha - 1} + \sum_{k=j}^n p_{k} \right)
\\
 &= \frac{\alpha}{\alpha-1} \sum_{j= 1}^n \ell_j \sum_{k=j}^n p_{k}\\
 &=\frac{\alpha}{\alpha-1} \sum_{j=1}^n \left( \frac{ w_{j}^\alpha (\alpha-1) }
                                   {\sum_{k=j}^n p_{k}} \right)^{\frac 1\alpha} \sum_{k=j}^n p_{k} \\
&= \alpha (\alpha-1)^{\frac{1-\alpha}\alpha} \sum_{j=1}^n w_{j} \left(\sum_{k=j}^n p_{k} \right)^{\frac{\alpha-1}\alpha}\\
&= \alpha (\alpha-1)^{\frac{1-\alpha}\alpha} \cdot B_{\vec{w},\vec{p}}(\sigma)\enspace,
\end{align*}
since scheduling jobs with processing times $\vec{p}$ in order $n,n-1,\ldots,1$ on a uniform speed machine, results in completion time
\[
    C_j = \sum_{k=j}^n p_{k}
\]
for every job $j$.
\end{proof}

\section{Mechanism design problem}

Clearly the operator can optimize the social cost only if the players announce the true penalty factors, otherwise the operator optimizes with false values, and we have no guarantee on the outcome.  This motivates the design of a truthful mechanism.

We design the mechanism as follows.  Fix an arbitrary job order $\pi$.
Denote the workload vector and the penalty vector as declared by the players by $\vec{w}$ and
$\vec {\hat p}$ respectively.  Note that we do assume that $\vec{w}$ is the vector of true workloads
but do not assume that $\vec {\hat p}$ is the vector of true penalties.
Given $\pi$, $\vec{w}$, and $\vec {\hat p}$, the mechanism is going to schedule all jobs
in the order given by $\pi$, setting the speed for job $j$ so that the time it takes to process it
is $\ell_j$ given by equation~\eqref{eq:l}.
We stress again that the mechanism can choose any order, for example uniformly at random,
but it is crucial that the order is \emph{independent} of the players strategies, i.e.,
the penalty factors they declare.

As suggested, with fixed order $\pi$, Theorem \ref{theo:reduction} applies, providing the vector
$\vec{\ell}$ of processing times that minimizes the social cost for this particular $\pi$.
In particular, with fixed order the underlying optimization problem is solved in polynomial time,
enabling practical implementation.

Let $OPT_{\pi}(\vec{\hat p})$ denote the value $E(\vec{\ell},\vec{w})$ with $\vec{\ell}$ being the minimizer for $E(\vec{\ell},\vec{w})+F(\pi,\vec{\ell},\vec{\hat p})$, which is the energy component of the social optimum. In addition we denote by $OPT_{\pi}(\vec{\hat p}_{-i})$ the similar value when player $i$ is excluded from the game. Formally if for a vector $v$ of dimension $n$ we denote by $v_{-i}$ the $(n-1)$-dimensional vector resulting from the deletion of the $i$-th element, then $OPT_{\pi}(\vec{\hat p}_{-i})$ denotes the value $E(\vec{\ell'},\vec{w}_{-i})$
 with $\vec{\ell'}$ being the minimizer for $E(\vec{\ell'},\vec{w}_{-i})+F(\pi_{-i},\vec{\ell'},\vec{\hat p}_{-i})$.  Note that in order to simplify the notation, we dropped the $-i$ subscript on $\pi$, in the notation  $OPT_{\pi}(\vec{\hat p}_{-i})$.

The operator defines a cost sharing scheme where player $i$ pays the penalty
\[
           b_i:= \alpha \left(OPT_{\pi}(\vec{\hat p})-OPT_{\pi}(\vec{\hat p}_{-i})\right)- \hat p_i C_i,
\]
and the executing length and speed vector are the minimizers for $OPT_{\pi}(\vec{\hat p})$.

Note that lengths and speeds of job $j>i$ minimizing $OPT_{\pi}(\vec{\hat p}_{-i})$ and $OPT_{\pi}(\vec{\hat p})$ are the same, by equation~\eqref{eq:l}.

\subsection{Truthfulness}
In this section, we prove that the mechanism admits a unique Nash equilibrium, which is truthful.  Formally, we claim that for every player $i$ the strictly dominant strategy is $\hat{p_i}=p_i$.

To show the above claim, we will need the following technical lemma.
\begin{lemma}
\label{lem:prop}
Consider $\alpha>1$, a fixed order $\pi$, and an arbitrary player $i=\pi(j)$. For all $k \leq \pi(j)$, we have:
\begin{align}
\alpha \frac{\partial s_k^{\alpha} \ell_k}{\partial \hat p_i}=& \ell_k \label{eq:lemprop1}
\\
\frac{\partial \ell_k}{\partial \hat p_i} < &0.\label{eq:lemprop2}
\end{align}
\end{lemma}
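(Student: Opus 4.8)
The plan is to work directly with the explicit closed forms for the social‑cost‑minimizing lengths and speeds supplied by equations~\eqref{eq:wl} and~\eqref{eq:l}, differentiate them, and simplify. As in the proof of Theorem~\ref{theo:reduction}, I would first assume without loss of generality that $\pi$ is the identity, so that the job at position $k$ is job $k$, and the claim is to be proved for every $k \le i$. Write $P_k := \sum_{m=k}^n \hat p_m$ for the relevant ``suffix weight''. By~\eqref{eq:l} the length minimizing the social cost is $\ell_k = w_k\bigl((\alpha-1)/P_k\bigr)^{1/\alpha}$, which I treat as a genuine smooth positive function of the declared vector $\vec{\hat p}$; this is legitimate precisely because~\eqref{eq:l} is a closed form, so ``$\ell_k$'' and ``$s_k$'' in the statement denote these minimizers as functions of $\vec{\hat p}$, and the partial derivatives are well defined. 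The single observation driving everything is that $\hat p_i$ occurs in the sum $P_k$ exactly when $k \le i$, so that $\partial P_k/\partial\hat p_i = 1$ for all $k \le i$ (it would be $0$ for $k > i$, consistent with the remark preceding this subsection, but those indices are outside the scope of the lemma).

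For~\eqref{eq:lemprop1}, I would first rewrite the per-job energy $s_k^{\alpha}\ell_k = w_k^{\alpha}\ell_k^{1-\alpha}$ in terms of $P_k$: substituting~\eqref{eq:l} gives $s_k^{\alpha}\ell_k = w_k(\alpha-1)^{(1-\alpha)/\alpha}P_k^{(\alpha-1)/\alpha}$. Differentiating with respect to $\hat p_i$ via the chain rule and $\partial P_k/\partial\hat p_i = 1$ yields $\partial(s_k^{\alpha}\ell_k)/\partial\hat p_i = \frac{\alpha-1}{\alpha}\,w_k(\alpha-1)^{(1-\alpha)/\alpha}P_k^{-1/\alpha}$. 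It then remains to recognize the right-hand side as $\ell_k/\alpha$: writing $(\alpha-1)^{(1-\alpha)/\alpha} = (\alpha-1)^{1/\alpha}(\alpha-1)^{-1}$ gives $w_k(\alpha-1)^{(1-\alpha)/\alpha}P_k^{-1/\alpha} = \ell_k/(\alpha-1)$, so the whole expression collapses to $\ell_k/\alpha$, which is~\eqref{eq:lemprop1}.

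For~\eqref{eq:lemprop2}, I would simply differentiate the closed form $\ell_k = w_k(\alpha-1)^{1/\alpha}P_k^{-1/\alpha}$ directly: for $k \le i$, $\partial\ell_k/\partial\hat p_i = -\frac{1}{\alpha}w_k(\alpha-1)^{1/\alpha}P_k^{-1/\alpha-1}$, which is strictly negative because $\alpha>1$, $w_k>0$, and $P_k>0$ (every announced penalty, hence every partial sum, being positive), while the leading sign is negative.

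There is no real conceptual obstacle here; the only thing requiring care is the bookkeeping of the exponents of $(\alpha-1)$ when converting between the closed forms for $\ell_k$, $s_k$, and $s_k^{\alpha}\ell_k$, and being explicit that the derivatives are taken of these closed-form minimizers (so they are computed from~\eqref{eq:wl}–\eqref{eq:l}, not from any implicit optimality condition).
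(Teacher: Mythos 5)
Your proposal is correct and follows essentially the same route as the paper's proof: both express $s_k^{\alpha}\ell_k$ and $\ell_k$ via the closed forms \eqref{eq:wl}--\eqref{eq:l} as functions of the suffix sum $\sum_{j=k}^n \hat p_j$, differentiate using the fact that this sum has derivative $1$ in $\hat p_i$ for $k\le i$, and simplify the powers of $(\alpha-1)$ to recover $\ell_k/\alpha$ and the strictly negative expression, respectively. The exponent bookkeeping in your computation matches the paper's line by line.
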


\begin{proof}
Fix a permutation $\pi$, a value $\alpha>1$ and an arbitrary player $i$. Without loss of generality, we assume that $i=\pi(i)$ for all $i=1,\ldots,n$.

Note that it follows from~\eqref{eq:wl} and~\eqref{eq:l} that for every $k \leq i$, we have
\begin{equation*}
s_k^{\alpha} \ell_k = (w_k/\ell_k)^\alpha \cdot \ell_k =
	\frac{w_k}{(\alpha-1)^{1-1/\alpha}}\left(\sum_{j=k}^n \hat p_j \right)^{1-1/\alpha} \enspace.
\end{equation*}

By taking a partial derivative of the above expression in $\hat p_i$, we obtain
\begin{align*}
\frac{\partial s_k^{\alpha} \ell_k}{\partial \hat p_i}=&\frac{w_k}{(\alpha-1)^{1-1/\alpha}}\frac{\partial\left(\sum_{j=k}^n \hat p_j \right)^{1-1/\alpha}}{\partial \hat p_i}\\
                                                  =&\frac{w_k}{(\alpha-1)^{1-1/\alpha}}\frac{\alpha-1}{\alpha}\left(\sum_{j=k}^n \hat p_j \right)^{-1/\alpha}\\
                                                  =&\frac{w_k}{\alpha (\alpha-1)^{-1/\alpha} }\left(\sum_{j=k}^n \hat p_j\right)^{-1/\alpha}\\
                                                  =&\frac{w_k}{\alpha}\left(\frac{\sum_{j=k}^n \hat p_j}{\alpha-1} \right)^{-1/\alpha}\\
                                                  =&\frac{w_k}{\alpha}  \left(\frac{\alpha-1}{\sum_{j=k}^n \hat p_j}\right)^{1/\alpha}\\
                                                  =&\frac{\ell_k}{\alpha} \enspace,
\end{align*}
where the last identity follows from~\eqref{eq:l}.  This immediately implies \eqref{eq:lemprop1} by
multiplying by $\alpha$.

Finally, we note that~\eqref{eq:lemprop2} immediately follows from~\eqref{eq:l}, as for all $k \leq i$,
\begin{equation*}
\frac{\partial \ell_k}{\partial \hat p_i} = w_k (\alpha-1)^{1/\alpha}\frac{\partial\left(\sum_{j=k}^n \hat p_i \right)^{-1/\alpha}}{\partial \hat p_i}
= \frac{-w_k (\alpha-1)^{1/\alpha}}{\alpha}\left(\sum_{j=k}^n \hat p_j \right)^{-1/\alpha-1}<0 \enspace.
\end{equation*}
\end{proof}

We now show the main result.
\begin{theorem}
\label{teo:strategyproof}
The mechanism is truthful.
\end{theorem}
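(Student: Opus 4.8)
The plan is to prove something stronger than that $\vec p$ is a Nash equilibrium: for every player $i$ and \emph{every} fixed declaration $\vec{\hat p}_{-i}$ of the other players, I will show that player $i$'s cost $p_i C_i + b_i$ is uniquely minimized by the honest declaration $\hat p_i = p_i$, so honesty is a strictly dominant strategy (and in particular $\vec p$ is the unique Nash equilibrium). As in the earlier proofs I would rename jobs so that $\pi(i)=i$, fix $\vec{\hat p}_{-i}$, and view all relevant quantities — $\ell_k$, $s_k$, $C_i=\sum_{k\le i}\ell_k$, and $OPT_\pi(\vec{\hat p})=E(\vec{\ell},\vec w)=\sum_k s_k^\alpha\ell_k$ — as functions of the single scalar $\hat p_i\in(0,\infty)$ through~\eqref{eq:wl}--\eqref{eq:l}. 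Two bookkeeping facts set the stage: by~\eqref{eq:l} the quantity $\ell_k$ (hence $s_k$) depends on $\hat p_i$ only via the partial sum $\sum_{j\ge k}\hat p_j$, which contains $\hat p_i$ exactly when $k\le i$; and $OPT_\pi(\vec{\hat p}_{-i})$ does not involve $\hat p_i$ at all. Substituting the definition of $b_i$, player $i$'s cost becomes
\[
\Gamma(\hat p_i) \;:=\; p_i C_i + b_i \;=\; (p_i-\hat p_i)\,C_i(\hat p_i) \;+\; \alpha\,OPT_\pi(\vec{\hat p}) \;-\; \alpha\,OPT_\pi(\vec{\hat p}_{-i}),
\]
whose last summand is constant in $\hat p_i$.

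The heart of the argument is a single differentiation. Writing $OPT_\pi(\vec{\hat p})=\sum_k s_k^\alpha\ell_k$ and applying part~\eqref{eq:lemprop1} of Lemma~\ref{lem:prop} to the indices $k\le i$ (the only ones that matter),
\[
\alpha\,\frac{\partial\,OPT_\pi(\vec{\hat p})}{\partial\hat p_i} \;=\; \alpha\sum_{k\le i}\frac{\partial\,(s_k^\alpha\ell_k)}{\partial\hat p_i} \;=\; \sum_{k\le i}\ell_k \;=\; C_i(\hat p_i),
\]
so the two occurrences of $C_i$ with opposite signs cancel and
\[
\frac{\partial\Gamma}{\partial\hat p_i} \;=\; -C_i(\hat p_i) + (p_i-\hat p_i)\,\frac{\partial C_i}{\partial\hat p_i} + C_i(\hat p_i) \;=\; (p_i-\hat p_i)\,\frac{\partial C_i}{\partial\hat p_i}.
\]
Since $C_i=\sum_{k\le i}\ell_k$, part~\eqref{eq:lemprop2} of Lemma~\ref{lem:prop} gives $\partial C_i/\partial\hat p_i<0$. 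Hence $\partial\Gamma/\partial\hat p_i$ is strictly negative for $\hat p_i<p_i$ and strictly positive for $\hat p_i>p_i$, so $\Gamma$ decreases on $(0,p_i)$ and increases on $(p_i,\infty)$, with unique minimizer $\hat p_i=p_i$. As this holds for arbitrary $\vec{\hat p}_{-i}$, truthfulness is strictly dominant for every player.

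I do not expect a genuine obstacle once Lemma~\ref{lem:prop} is in hand; the real content is the \emph{design} choice already encoded in $b_i$, namely the scaling constant $\alpha$, which is exactly what makes $\alpha\,\partial OPT_\pi/\partial\hat p_i = C_i$ and hence cancels the externality contribution, collapsing $\partial\Gamma/\partial\hat p_i$ to the transparent form $(p_i-\hat p_i)\,\partial C_i/\partial\hat p_i$. Conceptually this is the usual marginal-cost/VCG structure: under honest reports, $\Gamma(\vec p)=\alpha\,OPT_\pi(\vec p)-\alpha\,OPT_\pi(\vec p_{-i})$ is precisely the increase in social cost caused by adding player $i$, using that the optimal social cost equals $\alpha\cdot OPT_\pi$ (from Theorem~\ref{theo:reduction} and its proof, where $A=\alpha E$ at the optimal $\vec\ell$). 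What remains is only the routine verification of the points flagged above — that $OPT_\pi(\vec{\hat p}_{-i})$ is constant in $\hat p_i$, that only jobs $k\le i$ respond to $\hat p_i$, and that $\Gamma$ is differentiable on $(0,\infty)$ — each immediate from~\eqref{eq:l}.
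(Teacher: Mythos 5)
Your proposal is correct and follows essentially the same route as the paper: substitute the definition of $b_i$, differentiate in $\hat p_i$, use Lemma~\ref{lem:prop}\eqref{eq:lemprop1} to cancel the $\pm C_i$ terms so that the derivative collapses to $(p_i-\hat p_i)\sum_{k\le i}\partial\ell_k/\partial\hat p_i$, and conclude from \eqref{eq:lemprop2} by a sign analysis. Your write-up is in fact slightly more explicit than the paper's on the points that matter — that only jobs $k\le i$ respond to $\hat p_i$, that $OPT_\pi(\vec{\hat p}_{-i})$ is constant in $\hat p_i$, and the monotone-decreasing/increasing argument establishing the unique minimizer — but there is no substantive difference in method.
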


\begin{proof}
We need to show that every player $i$ minimizes his penalty when choosing the strategy $\hat p_i =p_i$. For the ease of notation, without loss of generality we assume $\pi(i)=i$. The total penalty of player $i$ is:
\begin{align*}
&p_i C_i + \alpha \left(OPT_{\pi}(\vec{\hat p})-OPT_{\pi}(\vec{\hat p}_{-i})\right)- \hat{p_i} C_i \\
=&(p_i - \hat p_i) \sum_{k=1}^i \ell_k + \alpha \left(OPT_{\pi}(\vec{\hat p})-OPT_{\pi}(\vec{\hat p}_{-i})\right) \enspace.
\end{align*}

By taking a partial derivative in $\hat p_i$, we get
\begin{eqnarray*}
& \left(-\sum_{k=1}^i \ell_k+(p_i-\hat p_i) \sum_{k=1}^i \frac{\partial \ell_k}{\partial \hat p_i}\right) + \alpha \left(\sum_{k=1}^i \frac{\partial s_k^{\alpha}\ell_k}{\partial\hat p_i}\right)\\
=& -\sum_{k=1}^i \ell_k+(p_i-\hat p_i) \sum_{k=1}^i \frac{\partial \ell_k}{\partial \hat p_i} + \sum_{k=1}^i \ell_k\\
=& (p_i-\hat p_i) \sum_{k=1}^i \frac{\partial \ell_k}{\partial \hat p_i} \enspace,
\end{eqnarray*}
by \eqref{eq:lemprop1}.  By Lemma~\ref{lem:prop} \eqref{eq:lemprop2},
each summand in the last expression is negative, which implies that $\hat p_i= p_i$
minimizes the player's total penalty.
\end{proof}

\subsection{Efficiency and overcharging}

In this section, we estimate the efficiency and overcharging of the total cost share when at least 2 players participate in the game.

\begin{theorem}
\label{teo:overcharging}
The sum of the cost shares is at least one and at most $\alpha+1$ times the optimum social cost.
\end{theorem}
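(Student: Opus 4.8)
The plan is to evaluate $\sum_i b_i$ at the truthful equilibrium guaranteed by Theorem~\ref{teo:strategyproof}, i.e.\ with $\vec{\hat p}=\vec{p}$, and then reduce the two claimed inequalities $OPT_\pi(\vec{p})\le\sum_i b_i\le(\alpha+1)\,OPT_\pi(\vec{p})$ to a statement about how the optimal energy splits among the jobs. As in the proof of Theorem~\ref{theo:reduction} assume $\pi(i)=i$, set $P_j:=\sum_{k=j}^n p_k$ and $\rho:=(\alpha-1)/\alpha\in(0,1)$, and let $E_j:=w_j^\alpha\ell_j^{1-\alpha}$ be the energy spent on job $j$ by the schedule the mechanism produces. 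Substituting \eqref{eq:l} gives the closed form $E_j=w_j(\alpha-1)^{-\rho}P_j^{\rho}$, hence $OPT_\pi(\vec{p})=\sum_{j=1}^n E_j$. The first, routine step is the identity $F(\pi,\vec{\ell},\vec{p})=(\alpha-1)\,OPT_\pi(\vec{p})$: this is exactly the computation carried out inside the proof of Theorem~\ref{theo:reduction}, combining $s_k^\alpha=P_k/(\alpha-1)$ from \eqref{eq:wl} with the rearrangement \eqref{ref:relationflowtime}, which yields $F=\sum_k \ell_k P_k=(\alpha-1)\sum_k E_k$. Plugging $\vec{\hat p}=\vec{p}$ and this identity into the definition of $b_i$ and summing gives
\[
\sum_{i=1}^n b_i=\alpha\sum_{i=1}^n\bigl(OPT_\pi(\vec{p})-OPT_\pi(\vec{p}_{-i})\bigr)-(\alpha-1)\,OPT_\pi(\vec{p}),
\]
so, writing $S:=\sum_i\bigl(OPT_\pi(\vec{p})-OPT_\pi(\vec{p}_{-i})\bigr)$, it suffices to prove $OPT_\pi(\vec{p})\le S\le 2\,OPT_\pi(\vec{p})$.

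Next I would make $OPT_\pi(\vec{p})-OPT_\pi(\vec{p}_{-i})$ explicit. By \eqref{eq:l} applied to the reduced instance, deleting the job at position $i$ leaves the length (hence the energy) of every job at a later position unchanged (as already noted right after the definition of the cost sharing scheme); for $j<i$ the quantity $P_j$ drops to $P_j-p_i$, so the energy of job $j$ becomes $w_j(\alpha-1)^{-\rho}(P_j-p_i)^{\rho}\le E_j$; and job $i$ itself disappears. Hence
\[
OPT_\pi(\vec{p})-OPT_\pi(\vec{p}_{-i})=E_i+(\alpha-1)^{-\rho}\sum_{j<i}w_j\bigl(P_j^{\rho}-(P_j-p_i)^{\rho}\bigr),
\]
a sum of nonnegative terms that is at least $E_i$. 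Summing over $i$ gives $S\ge\sum_i E_i=OPT_\pi(\vec{p})$, i.e.\ $\sum_i b_i\ge OPT_\pi(\vec{p})$; this is the lower (budget-balance) bound.

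For the upper bound, sum the last display over $i$ and exchange the order of summation to obtain $S=OPT_\pi(\vec{p})+(\alpha-1)^{-\rho}\sum_j w_j\sum_{i>j}\bigl(P_j^{\rho}-(P_j-p_i)^{\rho}\bigr)$. The key estimate is that $x\mapsto x^{\rho}$ is concave on $[0,\infty)$ and vanishes at $0$, so $(P_j-p_i)^{\rho}\ge(1-p_i/P_j)\,P_j^{\rho}$ and therefore $P_j^{\rho}-(P_j-p_i)^{\rho}\le p_i\,P_j^{\rho-1}$ for each $i>j$. Summing over $i>j$ and using $\sum_{i>j}p_i=P_j-p_j\le P_j$ bounds the inner sum by $P_j^{\rho}$, whence $S\le OPT_\pi(\vec{p})+(\alpha-1)^{-\rho}\sum_j w_jP_j^{\rho}=2\,OPT_\pi(\vec{p})$ and thus $\sum_i b_i\le(\alpha+1)\,OPT_\pi(\vec{p})$. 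I expect this concavity step — bounding $\sum_{i>j}(P_j^{\rho}-(P_j-p_i)^{\rho})$, which only superficially resembles a telescoping sum, by the clean quantity $P_j^{\rho}$ — to be the one genuinely non-mechanical point; everything else is bookkeeping around \eqref{eq:wl} and \eqref{eq:l}. (The hypothesis $n\ge 2$ is not essential: for $n=1$ one directly gets $\sum_i b_i=OPT_\pi(\vec{p})$, which is consistent with both bounds.)
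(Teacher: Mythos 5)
Your proposal is correct; I checked the reduction $\sum_i b_i=\alpha S-F$ with $F=\sum_i\hat p_iC_i=(\alpha-1)\,OPT_\pi$, the explicit formula for the marginal difference $OPT_\pi(\vec{p})-OPT_\pi(\vec{p}_{-i})$, and both bounds on $S$, and all steps go through. Your route is, however, genuinely different from the paper's in two respects. First, the paper works player by player throughout: for the lower bound it shows the stronger statement that each individual cost share satisfies $b_i>\ell_is_i^\alpha$, i.e.\ every player at least covers the energy of his own job (using the tangent-line form of concavity, $f(t)-f(t-a)>af'(t)$), whereas your aggregation via $F=(\alpha-1)OPT_\pi$ only yields the sum-level bound $\sum_ib_i\ge OPT_\pi$ — enough for the theorem, but losing the per-player cost-recovery property, which is of independent interest for a cost-sharing mechanism. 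Second, for the upper bound the paper bounds each marginal difference by $\frac{\hat p_i}{\alpha-1}\sum_{k\le i}\ell_k+\ell_is_i^\alpha$ using the monotonicity $\ell_k<\ell'_k$ (lengths only shrink when a later job's penalty is added), while you exchange the order of summation and use the chord form of concavity, $(P_j-p_i)^\rho\ge(1-p_i/P_j)P_j^\rho$, to bound $\sum_{i>j}\bigl(P_j^\rho-(P_j-p_i)^\rho\bigr)\le P_j^\rho$; your version isolates the clean intermediate claim $OPT_\pi\le S\le2\,OPT_\pi$ about the sum of marginal energies, which is arguably more transparent than the paper's chain of per-player manipulations. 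One cosmetic remark: like the paper's own proof, your bounds are stated relative to $OPT_\pi(\vec{p})$, the \emph{energy} component, whereas the theorem's wording says "social cost"; by Corollary~\ref{cor:efficient} these differ by a factor $\alpha$, but this discrepancy is already present in the paper and is not a defect of your argument.
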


\begin{proof}
Without loss of generality, we assume $\pi(i)=i$.  We start with the lower bound,
proving that the cost share $b_i$ of player $i$ is no smaller than the energy consumed by
this player's job.  By repeatedly using~\eqref{eq:wl} and~\eqref{eq:l}, in particular observing
that the speed and processing time of a job $k$ depends only on the parameters of jobs $i\geq k$,
i.e., for $k>i$ they are the same in $\vec{\hat p}$ and $\vec{\hat p}_{-i}$, we can express
$b_i$ as follows.
\begin{align}
b_i=&\alpha \left(OPT_{\pi}(\vec{\hat p})-OPT_{\pi}(\vec{\hat p}_{-i})\right)-\hat p_i C_i\notag\\
   =&\alpha\left(\sum_{k=1}^i \ell_k s_k^{\alpha}- \sum_{k=1}^{i-1} w_k \left(\frac{\sum_{j=k}^n \hat p_j-\hat p_i}{\alpha-1}\right)^{1-1/\alpha} \right)-\hat p_i \sum_{k=1}^i \ell_k \notag\\
   =&\alpha \ell_i s_i^{\alpha}- \ell_i \hat p_i +\sum_{k=1}^{i-1} \left(\alpha \ell_k s_k^{\alpha}- \ell_k \hat p_i - \alpha w_k \left(\frac{\sum_{j=k}^n \hat p_j-\hat p_i}{\alpha-1}\right)^{1-1/\alpha} \right)\notag\\
   =& \ell_i \left( \alpha s_i^{\alpha}- \hat p_i \right) +\sum_{k=1}^{i-1} \left(\alpha w_k \left(\frac{\sum_{j=k}^n \hat p_j}{\alpha-1}\right)^{1-1/\alpha}  - \alpha w_k \left(\frac{\sum_{j=k}^n \hat p_j-\hat p_i}{\alpha-1}\right)^{1-1/\alpha} -\ell_k \hat p_i \right)\notag\\
   =& \ell_i \left( \alpha s_i^{\alpha}- \hat p_i \right) +\sum_{k=1}^{i-1} \left(\frac{\alpha}{(\alpha-1)^{1-1/\alpha}} w_k \left(\left(\sum_{j=k}^n \hat p_j \right)^{1-1/\alpha} - \left(\sum_{j=k}^n \hat p_j-\hat p_i\right)^{1-1/\alpha}\right) -\ell_k \hat p_i \right)\notag\\
   >& \ell_i \left( \alpha s_i^{\alpha}- \hat p_i \right) +\sum_{k=1}^{i-1} \left(\frac{\alpha}{(\alpha-1)^{1-1/\alpha}} w_k \hat{p_i} \frac{\alpha-1}{\alpha}\left(\sum_{j=k}^n \hat p_j\right)^{-1/\alpha}-\ell_k \hat p_i\right) \label{eq:concavity}\\
   =& \ell_i \left( \alpha s_i^{\alpha}- \hat p_i \right) + \hat p_i \sum_{k=1}^{i-1} \left(\frac{w_k}{(\alpha-1)^{-1/\alpha}} \left(\sum_{j=k}^n \hat p_j\right)^{-1/\alpha}- \ell_k \right)\notag\\
   =& \ell_i \left( \alpha s_i^{\alpha}- \hat p_i \right) +\hat p_i \sum_{k=1}^{i-1} \left(\ell_k - \ell_k \right)\notag\\
=& \ell_i \left( \alpha s_i^{\alpha}- \hat p_i \right) \notag\\
=&\ell_i \left( s_i^{\alpha}+(\alpha-1) s_i^{\alpha}- \hat p_i \right) \notag\\
=&\ell_i \left( s_i^{\alpha}+ \sum_{k=i}^n \hat p_k- \hat p_i \right) \notag\\
=&\ell_i \left( s_i^{\alpha}+ \left( \sum_{k=i+1}^n \hat p_k\right) \right) \notag\\
>&\ell_i s_i^{\alpha} \notag \enspace,
\end{align}
where the inequality \eqref{eq:concavity} follows from the strict concavity of function $f:t \mapsto t^{1-1/\alpha}$ for $\alpha>1$, which implies
\[f(t_1)-f(t_1-a)=t_1^{1-1/\alpha}-(t_1-a)^{1-1/\alpha}>a f'(t_1)=a (1-1/\alpha) t_1^{-1/\alpha}.\]

Thus, we have that the energy consumption of user $i$ is $\ell_i s_i^{\alpha}$ and then, we have that the cost share $b_i$  is at least its cost energy consumption, which concludes the first part of the proof.

For the upper bound, we denote by $\vec{\ell'}$ the execution length vector which is the minimizer for $OPT_{\pi}(\vec{\hat p}_{-i})$.  For convenience, we denote the corresponding speed $s'_k := w_k / \ell'_k$.
\begin{align*}
OPT_{\pi}(\vec{\hat p})-OPT_{\pi}(\vec{\hat p}_{-i}) &=\sum_{k=1}^{n} (\ell_k s_k^{\alpha} - \ell'_k s'^{\alpha}_k) \\
                           &=\sum_{k=1}^{i-1} (\ell_k s_k^{\alpha} - \ell'_k s'^{\alpha}_k)+\ell_i s_i^\alpha\\
                           &=\sum_{k=1}^{i-1} \left(\ell_k \frac{\sum_{j=i}^n \hat p_j}{\alpha - 1}-\ell'_k\frac{\sum_{j=i}^n \hat p_j-\hat p_i}{\alpha - 1}\right) +\ell_i s_i^\alpha\\
                           &=\sum_{k=1}^{i-1} \left((\ell_k-\ell'_k) \frac{\sum_{j=i}^n \hat p_j-\hat  p_i }{\alpha - 1}\right) + \frac{\hat  p_i}{\alpha-1} \sum_{k=1}^{i-1} \ell_k +\ell_i s_i^\alpha\\
                           &<\frac{\hat  p_i}{\alpha-1} \sum_{k=1}^{i-1}\ell_k +\ell_i s_i^\alpha\\
                           &<\frac{\hat  p_i}{\alpha-1} \sum_{k=1}^{i}\ell_k +\ell_i s_i^\alpha.
\end{align*}
The inequality follows from $\ell_k<\ell'_k$ for every $1\leq k <i$.

We have the following bound on the cost share on player $i$,
\[
\ell_i s_i^\alpha \leq b_i \leq \alpha \left(\frac{\hat  p_i}{\alpha-1} \sum_{k=1}^{i} \ell_k +\ell_i s_i^\alpha \right) - \hat p_i  \sum_{k=1}^{i} \ell_k,
\]
which summed up over all players leads to
\begin{align*}
OPT_{\pi}(\vec{\hat p}) \leq \sum_{i=1}^n b_i \leq& \alpha OPT_{\pi}(\vec{\hat p})+ \frac{1}{\alpha-1} \sum_{i=1}^n \hat p_i \sum_{k=1}^{i}\ell_k\\
                                           = & \alpha OPT_{\pi}(\vec{\hat p}) + \frac{1}{\alpha-1} \sum_{i=1}^n \ell_i \sum_{k=i}^{n} \hat p_k\\
                                           =& \alpha OPT_{\pi}(\vec{\hat p}) + \frac{1}{\alpha-1} (\alpha -1) \sum_{i=1}^n \ell_i  s_i^\alpha\\
                                           =& \alpha OPT_{\pi}(\vec{\hat p}) + OPT_{\pi}(\vec{\hat p})\\
                                           =&(\alpha+1)OPT_{\pi}(\vec{\hat p}),
\end{align*}
concluding the proof.
\end{proof}

\begin{corollary}
\label{cor:efficient}
The social cost of Nash equilibrium is at least one and at most $(\alpha+1)$ times the optimal social cost.
\end{corollary}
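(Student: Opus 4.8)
The plan is to derive the corollary directly from truthfulness (Theorem~\ref{teo:strategyproof}) and the overcharging bound (Theorem~\ref{teo:overcharging}), with essentially no new computation. First I would note that, since the mechanism is truthful, in the unique Nash equilibrium every player declares $\hat p_i=p_i$, so $\vec{\hat p}=\vec p$; hence the mechanism outputs the schedule $(\pi,\vec\ell)$ with $\vec\ell$ the minimizer of $A_{\vec w,\vec p}(\pi,\cdot)$ given by~\eqref{eq:l}. For a fixed order $\pi$ the best the operator can do is exactly $A_{\vec w,\vec p}(\pi,\vec\ell)=E(\vec\ell,\vec w)+F(\pi,\vec\ell,\vec p)$, so this is the social cost optimum against which we compare.

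Next I would write the social cost of the equilibrium — the sum over all players of the individual cost $p_iC_i+b_i$ — as $F(\pi,\vec\ell,\vec p)+\sum_{i}b_i$. Abbreviating $E:=E(\vec\ell,\vec w)=OPT_{\pi}(\vec p)$ and $F:=F(\pi,\vec\ell,\vec p)$, Theorem~\ref{teo:overcharging} gives $E\le\sum_i b_i\le(\alpha+1)E$. For the lower bound, $F+\sum_i b_i\ge F+E=A_{\vec w,\vec p}(\pi,\vec\ell)$, so the equilibrium social cost is at least the optimum. For the upper bound, $F+\sum_i b_i\le F+(\alpha+1)E\le(\alpha+1)(F+E)=(\alpha+1)\,A_{\vec w,\vec p}(\pi,\vec\ell)$, where the middle inequality uses $\alpha\ge 1$ (indeed $\alpha\ge 2$) so that $F\le(\alpha+1)F$.

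There is no real technical obstacle here; the only point needing care is conceptual — identifying ``social cost of the Nash equilibrium'' with the total cost actually borne by the players (which includes the overcharged energy shares $\sum_i b_i$ rather than the true energy $E$), and observing that both quantities are evaluated for the same fixed order $\pi$, which is legitimate precisely because $\pi$ is chosen independently of the declared penalties. If one wants the comparison against the globally optimal order, it suffices to have the operator fix $\pi$ to be an optimal order for Problem~A — equivalently the reverse of an optimal order for Problem~B, via Theorem~\ref{theo:reduction} — after which the same chain of inequalities yields the bound with the unrestricted optimum in the denominator.
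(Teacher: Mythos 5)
Your proof is correct, and its skeleton matches the paper's: identify the equilibrium social cost with $F(\pi,\vec\ell,\vec p)+\sum_i b_i$ via truthfulness, then invoke the bounds $E\le\sum_i b_i\le(\alpha+1)E$ from Theorem~\ref{teo:overcharging}. Where you diverge is the final comparison step. You close the upper bound with the crude inequality $F+(\alpha+1)E\le(\alpha+1)(F+E)$, which costs nothing but also gains nothing. The paper instead proves the exact identity $A_{\vec w,\vec p}(\pi,\vec\ell)=\alpha E(\vec\ell,\vec w)$, equivalently $F(\pi,\vec\ell,\vec p)=(\alpha-1)E(\vec\ell,\vec w)$, which follows from the first-order condition \eqref{eq:wl} together with the summation swap \eqref{ref:relationflowtime}; this is the entire content of the paper's proof, the combination with Theorem~\ref{teo:overcharging} being left implicit. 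That identity is worth having: plugging it in gives an equilibrium social cost of at most $(\alpha-1)E+(\alpha+1)E=2\alpha E=2\,A_{\vec w,\vec p}(\pi,\vec\ell)$, i.e.\ a ratio of at most $2$ rather than $\alpha+1$ (and it also makes the lower bound an immediate consequence of $\sum_i b_i\ge E$). So your argument proves the stated corollary but is strictly weaker than what the paper's computation yields.

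One caveat on your closing aside: fixing $\pi$ to be an optimal order for Problem~A is not available to the operator, since that order depends on the true penalties $\vec p$, which are private; making $\pi$ depend on the declared penalties would destroy the independence on which Theorem~\ref{teo:strategyproof} relies. This is exactly the obstruction the paper's final remark points out, so the bound against the unrestricted optimum does not follow as you suggest.
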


\begin{proof}
Let $\pi$ and $\vec{\ell}$ and be a fixed order job and the execution length vector minimizing
\[A_{\vec{w},\vec{p}}(\pi,\vec{\ell})=E(\vec\ell,\vec{w})+F(\pi,\vec{\ell}, \vec{p}).\]
It suffices to show
\[A_{\vec{w},\vec{p}}(\pi,\vec{\ell})=\alpha E(\vec{\ell},\vec{w}),\]
or equivalent
\[(\alpha-1)E(\vec{\ell},\vec{w})=F(\pi,\vec{\ell}, \vec{p}).\]
We have
\[(\alpha-1)E(\vec{\ell},\vec{w})=(\alpha-1) \sum_{j=1}^n w^{\alpha}_j \ell_j^{1-\alpha}.\]
From \eqref{eq:wl}, we have $(\alpha-1) w^{\alpha}_j \ell_j^{-\alpha}=\sum_{k=j}^n p_{k}$,
which implies
\[(\alpha-1)E(\vec{\ell},\vec{w})=\sum_{j=1}^n \ell_j \sum_{k=j}^n p_{k}=\sum_{j=1}^n p_j \sum_{k=1}^j \ell_k=F(\pi,\vec{\ell}, \vec{p}),\]
which holds by \eqref{ref:relationflowtime}.
\end{proof}

\section{Final remark}
The standard quality measure of the outcome of a game, is the ratio between the social cost of the Nash equilibria and the optimal social cost. Although we have that the ratio is between $1$ and $\alpha+1$ when the job order is fixed, this constant upper bound does not hold when considering the optimal social cost for the best possible order, which might differ from the fixed order in the game. This observation motivates future work: the study of a different game, where the regulator organizes an auction over the rank positions of the schedule.

Finally, we leave open the question about the return of the overcharging to the players for our mechanism, which could improve the ratio between the social cost of the Nash equilibria and the optimal social cost, such as several authors have proposed for VCG mechanisms in the environmental economics setting (see \cite{Varian:94:SolutionExternalities,GrovesLedyard:77:OptimalAllocation,DugganRoberts:02:ImplementingEfficientAllocation,Montero:07:Extension}).\\
%On the other hand, we leave open the question about the destination of overcharging mechanism. In an economics setting, several authors \cite{LansGoulder:96:Optimal-taxation,Tietenberg:03:Tradable-Permits} have argued that from a political-economic point of view the most efficient destination of overcharging mechanism is to reduce the distortion of economic activity. In our case, we can think of destinations such as the improving or replacing of hardware in the computing system. However, someone can argue that a practical property of the mechanism is to return the overcharging to the participating player, which could improve the ratio between the social cost of the Nash equilibria and the optimal social cost, such as several authors have proposed in the environmental economics setting (see \cite{Varian:94:SolutionExternalities,GrovesLedyard:77:OptimalAllocation,DugganRoberts:02:ImplementingEfficientAllocation,Montero:07:Extension}).\\

\noindent \textbf{Acknowledgements}\\\\
Christoph Dürr and Oscar C. Vásquez were partially supported by grant ANR-11-BS02-0015.
{\L}ukasz Je{\.z} was partially supported by Israeli Centers of Research Excellence (I-CORE)
program, Center No.4/11, Polish National Science Centre (NCN) grant DEC-2013/09/B/ST6/01538, 2014-2017,
and Foundation's for Polish Science (FNP) Start scholarship.

\bibliographystyle{abbrv}
\bibliography{all}
\end{document}